\documentclass{article}

\usepackage{graphicx}
\usepackage{amsmath}
\usepackage{amsthm}
\usepackage{amssymb}
\usepackage{url}
\usepackage{paralist}
\usepackage{verbatim}
\usepackage{subfig}
\usepackage{wrapfig}
\usepackage{sidecap}

\usepackage{fancyhdr}
\setlength{\headheight}{13pt}%
\pagestyle{fancy}%
\fancyhf{}%
\fancyhead[LE,RO]{\thepage}%
\fancyhead[RE]{ \nouppercase{\leftmark}}%
\fancyhead[LO]{\nouppercase{\rightmark}}

\newtheorem{theorem}{Theorem}{\bfseries}{\itshape}
\newtheorem*{theorem*}{Proof}{\bfseries}{\itshape}
\newtheorem{lemma}{Lemma}{\bfseries}{\itshape}
\newtheorem{definition}{Definition}{\bfseries}{\itshape}
\newtheorem{observation}{Observation}{\bfseries}{\itshape}

\newcounter{mycount}
\newcommand\myprob[4]{%
  \stepcounter{mycount}
  \par\noindent\\
  {} #1\\
  {\bfseries \textsc{Input}}: #2\\
  {\bfseries \textsc{Output}}: #3\\
  {}\par
}

\title{New Hardness Results for Guarding Orthogonal Polygons with Sliding
Cameras\thanks{Work supported in part by the Natural Sciences and
Engineering Research Council of Canada (NSERC).}}

\author{Stephane Durocher and Saeed Mehrabi\\
Department of Computer Science,\\ University of Manitoba, Winnipeg,
Canada.\\ \texttt{\{durocher,mehrabi\}@cs.umanitoba.ca}}

\date{}

\begin{document}

\maketitle

\begin{abstract}
Let $P$ be an orthogonal polygon. Consider a sliding camera that
travels back and forth along an orthogonal line segment $s\in P$ as
its \emph{trajectory}. The camera can see a point $p\in P$ if there
exists a point $q\in s$ such that $pq$ is a line segment normal to
$s$ that is completely inside $P$. In the \emph{minimum-cardinality
sliding cameras problem}, the objective is to find a set $S$ of
sliding cameras of minimum cardinality to guard $P$ (i.e., every
point in $P$ can be seen by some sliding camera) while in the
\emph{minimum-length sliding cameras problem} the goal is to find
such a set $S$ so as to minimize the total length of trajectories
along which the cameras in $S$ travel.

In this paper, we first settle the complexity of the minimum-length
sliding cameras problem by showing that it is polynomial tractable
even for orthogonal polygons with holes, answering a question asked
by Katz and Morgenstern~\cite{katz2011}. We next show that the
minimum-cardinality sliding cameras problem is \textsc{NP}-hard when
$P$ is allowed to have holes, which partially answers another
question asked by Katz and Morgenstern~\cite{katz2011}.
\end{abstract}

\section{Introduction}
\label{sec:introduction}%
The art gallery problem is well known in computational geometry,
where the objective is to cover a geometric shape (e.g., a polygon)
with visibility regions of a set of point guards while minimizing
the number of guards. The problem's multiple variants have been
examined extensively (e.g., see~\cite{joseph1987,urrutia2000}) and
can be classified based on the type of guards (e.g., points or line
segments), the type of visibility model and the geometric shape
(e.g., simple polygons, orthogonal polygons~\cite{hoffmann1990},
polyominoes~\cite{biedl2012}).

In this paper, we consider a variant of the orthogonal art gallery
problem introduced by Katz and Morgenstern~\cite{katz2011}, in which
\emph{sliding cameras} are used to guard the gallery. Let $P$ be an
orthogonal polygon with $n$ vertices. A sliding camera travels back
and forth along an orthogonal line segment $s$ inside $P$. The
camera (i.e., the guarding line segment $s$) can \emph{see} a point
$p\in P$ (equivalently, $p$ is \emph{orthogonally visible} to $s$)
if and only if there exists a point $q$ on $s$ such that $pq$ is
normal to $s$ and is completely contained in $P$. We study two
variants of this problem: in the \emph{minimum-cardinality sliding
cameras (MCSC) problem}, we wish to minimize the number of sliding
cameras so as to guard $P$ entirely, while in the
\emph{minimum-length sliding cameras (MLSC) problem} the objective
is to minimize the total length of trajectories along which the
cameras travel; we assume that in both variants of the problem,
polygon $P$ and sliding cameras are constrained to be orthogonal. In
both variations, every point in $P$ must be visible to some camera
at some point along its trajectory. See
Figure~\ref{fig:variantsDefinition}.

Throughout the paper, we denote an orthogonal polygon with $n$
vertices by $P$. Moreover, we denote the set of vertices and the set
of edges of $P$ by $V(P)$ and $E(P)$, respectively. We consider $P$
to be a closed set; therefore, a camera's trajectory may include an
edge of $P$. We also assume that a camera can see any point on its
trajectory. We say that a set $T$ of orthogonal line segments
contained in $P$ is a \emph{cover of} $P$, if their corresponding
cameras can collectively see any point in $P$; we sometimes say that
the line segments in $T$ \emph{guard} $P$ entirely.\\

\begin{figure}[t]
\centering%
\includegraphics[width=0.85\textwidth]{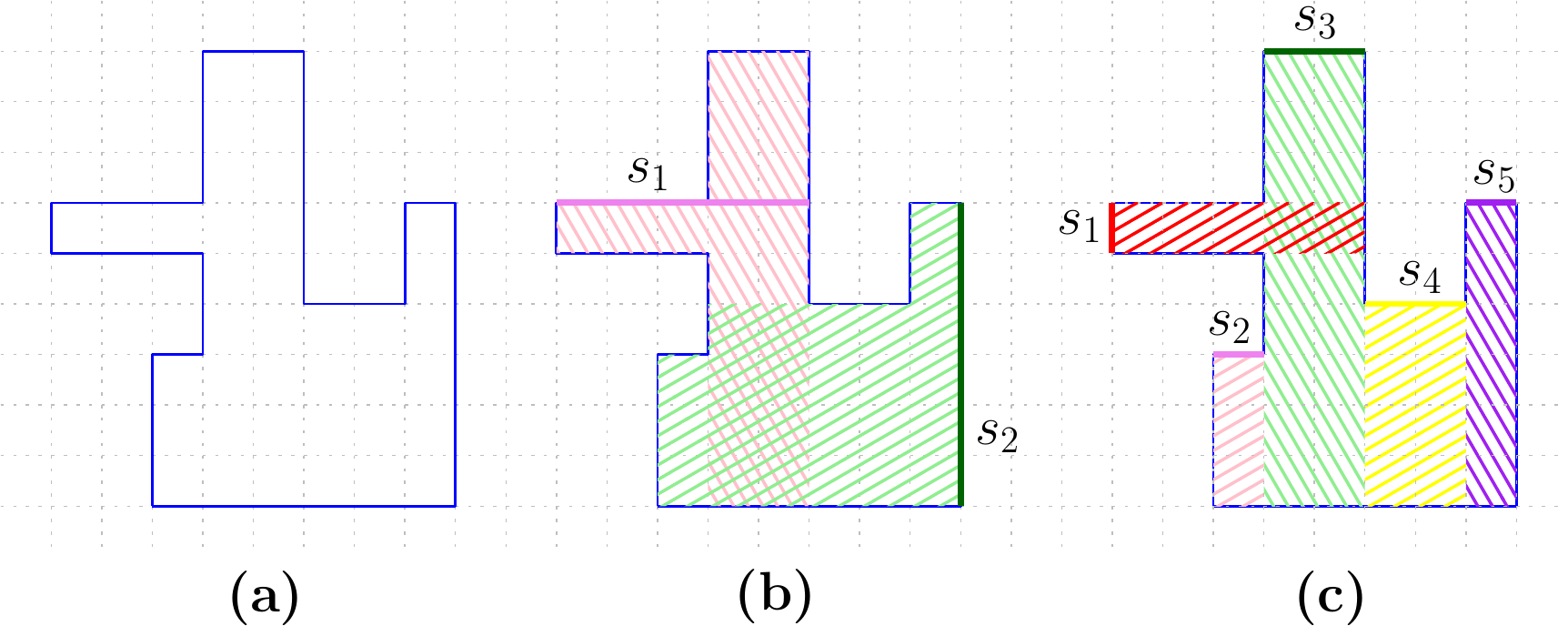}
\caption{An illustration of the variants of the problem. Each grid
cell has size $1\times1$. (a) A simple orthogonal polygon $P$. (b)
An optimal solution for the minimum-cardinality sliding cameras
problem on $P$. The trajectories of two sliding cameras $s_1$ and
$s_2$ are shown in pink and green, respectively; each shaded region
indicates the visibility region of the corresponding camera. This
set of two cameras is an optimal solution to the minimum-cardinality
sliding cameras problem on $P$. (c) A set of five sliding cameras
whose total length of trajectories is 7, which is an optimal
solution for the minimum-length sliding cameras problem on $P$.}
\label{fig:variantsDefinition}%
\end{figure}

\noindent{\bf Related Work.} The art gallery problem was first
introduced by Klee in 1973. Two years later,
Chvatal~\cite{chvatal1975} gave an upper bound proving that $\lfloor
n/3 \rfloor$ point guards are always sufficient and sometimes
necessary to guard a simple polygon with $n$ vertices. The
orthogonal art gallery problem was first studied by Kahn et
al.~\cite{jeff1983} who proved that $\lfloor n/4 \rfloor$ guards are
always sufficient and sometimes necessary to guard the interior of a
simple orthogonal polygon. Lee and Lin~\cite{lee1986} showed that
the problem of guarding a simple polygon using the minimum number of
guards is \textsc{NP}-hard. Moreover, the problem was also shown to
be \textsc{NP}-hard for orthogonal polygons~\cite{dietmar1995}.

Limiting visibility allows some versions of the problem to be solved
in polynomial time. Motwani et al.~\cite{motwani1988} studied the
art gallery problem under $s$-visibility, where a guard point $p\in
P$ can see all points in $P$ that can be connected to $p$ by an
orthogonal staircase path contained in $P$. They use a perfect graph
approach to solve the problem in polynomial time. Worman and
Keil~\cite{worman2007} defined $r$-visibility, in which a guard
point $p\in P$ can see all points $q\in P$ such that the bounding
rectangle of $p$ and $q$ (i.e., the axis-parallel rectangle with
diagonal $\overline{pq}$) is contained in $P$. Given that $P$ has
$n$ vertices, they use a similar approach to Motwani et
al.~\cite{motwani1988} to solve this problem in
$\widetilde{O}(n^{17})$ time, where $\widetilde{O}()$ hides
poly-logarithmic factors. Moreover, Lingas et al.~\cite{lingas2008}
presented a linear-time 3-approximation algorithm for this problem.

Recently, Katz and Morgenstern~\cite{katz2011} introduce sliding
cameras as another model of visibility to guard a \emph{simple}
orthogonal polygon $P$; they only study the MCSC problem. They first
consider a restricted version of the problem, where cameras are
constrained to travel only vertically inside the polygon. Using a
similar approach to Motwani et al.~\cite{motwani1988} they construct
a graph $G$ corresponding to $P$ and then show that
\begin{inparaenum}[(i)]\item solving this problem on $P$ is
equivalent to solving the minimum clique cover problem on $G$, and
that \item $G$ is chordal.
\end{inparaenum} Since the minimum clique cover problem is
polynomial solvable on chordal graphs, they solve the problem in
polynomial time. They also generalize the problem such that both
vertical and horizontal cameras are allowed (i.e., the MCSC
problem); they present a 2-approximation algorithm for this problem
under the assumption that the given input is an $x$-monotone
orthogonal polygon. They leave open the complexity of the problem
and mention studying the minimum-length sliding cameras problem as
future work.

A \emph{histogram} $H$ is a simple polygon that has an edge, called
the \emph{base}, whose length is equal to the sum of the lengths of
the edges of $H$ that are parallel to the base. Moreover, a
\emph{double-sided histogram} is the union of two histograms that
share the same base edge and that are located on opposite sides of
the base. It is easy to observe that the MCSC problem is equivalent
to the problem of covering $P$ with minimum number of double-sided
histograms. Fekete and Mitchell~\cite{fekete2001} proved that
partitioning an orthogonal polygon (possibly with holes) into a
minimum number of histograms is \textsc{NP}-hard. However, their
proof does not directly imply that the MCSC problem is also
\textsc{NP}-hard for orthogonal polygons with holes.\\

\noindent{\bf Our Results.} In this paper, we first answer a
question asked by Katz and Morgenstern~\cite{katz2011} by proving
that the MLSC problem is solvable in polynomial time even for
orthogonal polygons with holes (see
Section~\ref{sec:optimalAlgorithmLength}). We next show that the
MCSC problem is \textsc{NP}-hard for orthogonal polygons with holes
(see Section~\ref{sec:minimumCardinality}), which partially answers
another question asked by Katz and Morgenstern~\cite{katz2011}. We
conclude the paper by Section~\ref{sec:conclusion}.

\section{The MLSC Problem: An Exact Algorithm}
\label{sec:optimalAlgorithmLength}%
In this section, we give an algorithm that solves the MLSC problem
exactly in polynomial time even when $P$ has holes. Let $T$ be a
cover of $P$. In this section, we say that $T$ is an \emph{optimal
cover for $P$} if the total length of trajectories along which the
cameras in $T$ travel is minimum over that of all covers of $P$. Our
algorithm relies on reducing the MLSC problem to the
\emph{minimum-weight vertex cover problem} in graphs. We remind the
reader of the definition of the minimum-weight vertex cover problem:
\begin{definition}
\label{def:weightVertexCover}%
Given a graph $G=(V,E)$ with positive edge weights, the {\em
minimum-weight vertex cover} problem is to find a subset $V'
\subseteq V$ that is a vertex cover of $G$ (i.e., every edge in $E$
has at least one endpoint in $V'$) such that the sum of the weights
of vertices in $V'$ is minimized.
\end{definition}
The minimum-weight vertex cover problem is \textsc{NP}-hard in
general~\cite{karp1972}. However, it is solvable in polynomial time
when the input graph is \emph{bipartite} because the constraint
matrix of the Integer Program corresponding to the minimum-weight
vertex cover problem is totally unimodular~\cite{wikipediaVC}. Given
$P$, we first construct a vertex-weighted graph $G_P$ and then we
show that \begin{inparaenum}[(i)]\item the MLSC problem on $P$ is
equivalent to the minimum-weight vertex cover problem on $G_P$, and
that \item graph $G_P$ is bipartite. \end{inparaenum}\\

Similar to Katz and Morgenstern~\cite{katz2011}, we define a
partition of an orthogonal polygon $P$ into rectangles as follows.
Extend the two edges of $P$ incident to every reflex vertex in
$V(P)$ inward until they hit the boundary of $P$. Let $S(P)$ be the
set of the extended edges and the edges of $P$ whose endpoints are
both non-reflex vertices of $P$. We refer to elements of $S(P)$
simply as \emph{edges}. The edges in $S(P)$ partition $P$ into a set
of rectangles; let $R(P)$ denote the set of resulting rectangles. We
observe that in order to guard $P$ entirely, it suffices to guard
all rectangles in $R(P)$. The following observations are
straightforward:

\begin{observation}
\label{obs:smallerParts}%
Let $T$ be a cover of $P$ and let $s$ be an orthogonal line segment
in $T$. Then, for any partition of $s$ into line segments
$s_1,s_2,\ldots,s_k$ the set $T'=(T\setminus s)\cup\{s_1,\ldots,
s_k\}$ is also a cover of $P$ and the respective sums of the lengths
of segments in $T$ and $T'$ are equal.
\end{observation}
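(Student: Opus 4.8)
The plan is to reduce everything to a single local fact about orthogonal visibility: a point is seen by a camera because of a witnessing point on its trajectory, and splitting the trajectory merely relocates that witness to one of the pieces. First I would record the set-theoretic observation that, because $s_1,s_2,\ldots,s_k$ partition $s$, we have $s=\bigcup_{i=1}^{k} s_i$ as subsets of $P$, and that each $s_i$ lies on the same supporting line as $s$; in particular, ``normal to $s_i$'' and ``normal to $s$'' denote the same direction. These two remarks are what make the visibility argument go through cleanly.

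Next I would establish the key claim that the visibility region of $s$ equals the union of the visibility regions of $s_1,\ldots,s_k$. For the forward inclusion, suppose a point $p\in P$ is seen by $s$. By the definition of orthogonal visibility there is a point $q\in s$ such that $\overline{pq}$ is normal to $s$ and completely contained in $P$. Since $q\in s_i$ for some $i$ and the normal direction agrees, the very same segment $\overline{pq}$ witnesses that $p$ is seen by $s_i$. The reverse inclusion is immediate: each $s_i\subseteq s$, so any witness for visibility by $s_i$ is also a witness for visibility by $s$. Hence the total region guarded by $T'=(T\setminus s)\cup\{s_1,\ldots,s_k\}$ coincides with that guarded by $T$, and since $T$ is a cover of $P$, so is $T'$.

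Finally I would handle the lengths. A partition of the segment $s$ consists of subsegments with pairwise disjoint interiors whose union is $s$, so their lengths satisfy $\sum_{i=1}^{k}|s_i|=|s|$. As $T$ and $T'$ differ only in that $s$ is replaced by these pieces, the two sums of trajectory lengths are equal.

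I do not expect a genuine obstacle here; the statement is essentially a restatement of the definition of a sliding camera. The only point that requires a moment of care is the forward inclusion above, where one must explicitly note that the witnessing point $q$ lands in some single piece $s_i$ and that the shared normal direction lets visibility by $s$ be inherited by that piece.
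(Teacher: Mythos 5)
Your proof is correct: the paper states this observation without proof (declaring it straightforward), and your argument---that the witnessing point $q$ lands in some piece $s_i$ sharing the same supporting line and normal direction, so the visibility regions of the pieces union to that of $s$, while lengths add up under the partition---is precisely the routine verification the authors had in mind.
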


\begin{observation}
\label{obs:movement}%
Let $T$ be a cover of $P$. Moreover, let $T'$ be the set of line
segments obtained from $T$ by translating every vertical line
segment in $T$ horizontally to the nearest boundary of $P$ to its
right and every horizontal line segment in $T$ vertically to the
nearest boundary of $P$ below it. Then, $T'$ is also a cover of $P$
and the respective sums of the lengths of line segments in $T$ and
$T'$ are equal. We call $T'$ a \emph{regular cover} of $P$.
\end{observation}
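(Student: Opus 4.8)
The plan is to handle the two translations (vertical segments sliding rightward, horizontal segments sliding downward) separately and, for each \emph{individual} camera, to show that translating it can only enlarge the set of points it sees. The equality of total lengths is then immediate, since a translation is an isometry and therefore preserves the length of every segment it moves, and the summed length is unaffected by any overlaps the translated segments might create. Thus it suffices to prove that if $s$ is a single vertical segment of $T$ and $s'$ is its image after sliding horizontally to the nearest boundary on its right, then every point of $P$ visible to $s$ is also visible to $s'$; the case of a horizontal segment sliding downward is symmetric, with the roles of the two coordinate axes exchanged. Granting this, the union of the visibility regions of the segments of $T'$ contains the union of those of $T$, which is all of $P$, so $T'$ is a cover.

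First I would fix notation. Let $s$ lie at $x$-coordinate $x_0$ and span heights $[y_1,y_2]$, and for each $y \in [y_1,y_2]$ let $d(y)$ denote the horizontal distance from $(x_0,y)$ to the first boundary point of $P$ encountered moving rightward. Then $s'$ lies at $x$-coordinate $x_0 + \delta$, where $\delta = \min_{y \in [y_1,y_2]} d(y)$. The one fact I would establish before anything else is that the swept rectangle $[x_0, x_0+\delta] \times [y_1,y_2]$ is contained in $P$: for every height $y$ in the range we have $\delta \le d(y)$, and no boundary of $P$ is crossed along the horizontal segment from $(x_0,y)$ to $(x_0+d(y),y)$, so the shorter horizontal segment ending at $(x_0+\delta,y)$ lies in $P$. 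Because $d(y)$ already measures distance to the \emph{nearest} obstruction, this argument accounts for hole edges as well as the outer boundary, so the presence of holes requires no change.

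With the swept rectangle in hand, I would take an arbitrary point $p=(x_p,y_p)$ visible to $s$, so that $y_p \in [y_1,y_2]$ and the horizontal segment $\sigma$ from $p$ to $(x_0,y_p) \in s$ lies in $P$, and split into the cases $x_p \le x_0$ and $x_p > x_0$. When $x_p \le x_0$, concatenating $\sigma$ with the horizontal segment from $(x_0,y_p)$ to $(x_0+\delta,y_p)$, which lies in $P$ by the swept-rectangle fact, yields a horizontal segment in $P$ from $p$ to the point $(x_0+\delta,y_p) \in s'$, so $p$ is visible to $s'$. When $x_p > x_0$, the horizontal segment from $p$ to $(x_0+\delta,y_p)$ is either a subsegment of the swept rectangle, if $x_p \le x_0+\delta$, or a subsegment of $\sigma$, if $x_p > x_0+\delta$; in the latter subcase the inequality $\delta \le d(y_p)$ places $x_0+\delta$ between $x_0$ and $x_p$, so the segment lies inside $\sigma \subseteq P$, and in both subcases $p$ is again visible to $s'$.

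The only place requiring genuine care, and hence the main obstacle, is this last subcase, where $p$ lies beyond the translated position and one must confirm that the normal segment witnessing visibility to $s'$ still stays inside $P$; the inequality $\delta \le d(y_p)$ is exactly what guarantees this, since it keeps $s'$ within the horizontal span already certified to be in $P$ by the visibility of $p$ to $s$. Once both cases are settled we conclude that the visibility region of $s$ is contained in that of $s'$, and summing over all segments of $T$ completes the argument.
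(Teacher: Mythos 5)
The paper itself offers no proof of this observation: it is one of two statements the authors simply declare ``straightforward,'' so there is nothing to compare against line by line. Your argument is correct and is the natural formalization of what the authors presumably had in mind: reduce to a single segment, establish that the swept rectangle between the original and translated positions lies in $P$, and conclude that translation toward the boundary can only enlarge a segment's visibility region, so the union of visibility regions (and hence the covering property) is preserved, while lengths are trivially preserved under translation. Your three-way case analysis on $x_p$ is exactly the content that makes the observation true, and the subcase $x_p > x_0+\delta$ that you single out is indeed the only one where the witness segment comes from $\sigma$ rather than the swept rectangle. One small definitional point deserves repair: you define $d(y)$ as the distance to the \emph{first boundary point} encountered moving right, but since $P$ is closed (the paper explicitly treats $P$ as a closed set, so hole boundaries belong to $P$), a horizontal sightline may graze along the edge of a hole without leaving $P$; in that case the first boundary point is encountered strictly before the sightline exits $P$, and your $\delta$ would be smaller than the translation distance the observation prescribes (in the extreme case where $s$ already lies on a boundary edge with $P$ to its right, your $d(y)$ would be $0$ and $s$ would not move at all). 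The fix is to define $d(y)$ as the length of the maximal horizontal segment starting at $(x_0,y)$ and contained in $P$; every step of your proof then goes through verbatim, since all you ever use is the inequality $\delta \le d(y)$ together with containment of the swept rectangle in $P$.
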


We now prove the following result.

\begin{lemma}
\label{lem:guardingARectangle}%
Let $R\in R(P)$ be a rectangle and let $T$ be a cover of $P$. Then,
there exists a set $T'\subseteq T$ such that all line segments in
$T'$ have the same orientation (i.e., they are all vertical or they
are all horizontal) and they collectively guard $R$ entirely.
\end{lemma}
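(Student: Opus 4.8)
The plan is to classify the cameras of $T$ by orientation and to show that, restricted to the single cell $R$, the region covered by the vertical cameras is a union of full horizontal chords of $R$, while the region covered by the horizontal cameras is a union of full vertical chords; a short pigeonhole argument then forces one of the two families to guard $R$ on its own.

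First I would fix coordinates so that $R=[x_1,x_2]\times[y_1,y_2]$ and establish the central claim: if a vertical camera $s\in T$ sees any single point $p=(x_p,y_p)\in R$, then $s$ sees the entire horizontal chord $\{(x,y_p):x_1\le x\le x_2\}$ of $R$. Since $s$ is vertical and sees $p$ via a horizontal segment normal to $s$, there is a point $q=(x_s,y_p)\in s$ with the horizontal segment $pq\subseteq P$. For an arbitrary chord point $(x,y_p)$ I would exhibit the \emph{same} witness $q$ (which lies on $s$, since $q\in s$ already) and argue that the segment from $(x,y_p)$ to $q$ lies in $P$: this segment is either a sub-segment of the sightline $pq$, hence in $P$, or it decomposes into a sub-segment of $pq$ together with a portion of the chord of $R$ at height $y_p$, and the latter lies in $R\subseteq P$; the three cases $x_s\le x_1$, $x_s\ge x_2$, and $x_1\le x_s\le x_2$ cover all positions of the camera relative to $R$. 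The symmetric statement for horizontal cameras and vertical chords follows by exchanging the roles of the two axes.

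Given the claim, let $A\subseteq R$ be the set of points of $R$ seen by the vertical cameras of $T$ and $B\subseteq R$ the set seen by the horizontal cameras. By the claim, $A$ is a union of full horizontal chords, so $A=[x_1,x_2]\times Y_A$ for some $Y_A\subseteq[y_1,y_2]$, and similarly $B=X_B\times[y_1,y_2]$ for some $X_B\subseteq[x_1,x_2]$. Because $T$ is a cover of $P$ and $R\subseteq P$, every point of $R$ is seen by some camera, so $A\cup B=R$. I would then argue by contradiction: if $Y_A\ne[y_1,y_2]$ and $X_B\ne[x_1,x_2]$, pick $y^{\ast}\in[y_1,y_2]\setminus Y_A$ and $x^{\ast}\in[x_1,x_2]\setminus X_B$; the point $(x^{\ast},y^{\ast})$ then lies in $R$ but in neither $A$ nor $B$, contradicting $A\cup B=R$. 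Hence $Y_A=[y_1,y_2]$ or $X_B=[x_1,x_2]$, and in the first case the vertical cameras of $T$ guard $R$, while in the second the horizontal cameras do; taking $T'$ to be the corresponding single-orientation family yields the lemma.

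The main obstacle is the central chord-extension claim, namely the bookkeeping needed to show that the extended horizontal (resp.\ vertical) segment stays inside $P$ for every position of the camera relative to $R$. The key facts I would lean on are that $R\subseteq P$, so every chord of $R$ lies in $P$, and that any sub-segment of a segment contained in $P$ is itself contained in $P$; once these are in place the remaining counting argument is immediate.
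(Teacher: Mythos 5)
Your proof is correct and follows essentially the same route as the paper: the paper's argument also hinges on the fact that, within a cell $R$ of the partition, visibility by vertical cameras propagates along full horizontal chords and visibility by horizontal cameras along full vertical chords, and it likewise derives a contradiction from the intersection point of an unguarded horizontal chord and an unguarded vertical chord. The only difference is presentational: the paper states the chord-propagation fact in contrapositive form (an unseen point forces its whole maximal chord to be unseen) and asserts it without the case analysis, whereas you prove it explicitly, which if anything makes the argument more complete.
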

\begin{proof}
Suppose, by a contradiction, that there does not exists such a set
$T'$. Let $R_v$ (resp., $R_h$) be the subregion of $R$ that is
guarded by the all union of the vertical (resp., horizontal) line
segments in $T$ and let $R_v^c=R\setminus R_v$ (resp.,
$R_h^c=R\setminus R_h$). Since $R$ cannot be guarded exclusively by
vertical line segments (resp., horizontal line segments), we have
$R_v^c\ne\emptyset$ (resp., $R_h^c\ne\emptyset$). Choose any point
$p\in R_v^c$ and let $L_h$ be the maximal horizontal line segment
inside $R$ that crosses $p$. Since no vertical line segment in $T$
can guard $p$, we conclude that no point on $L_h$ is guarded by a
vertical line segment in $T$. Similarly, choose any point $q\in
R_h^c$ and let $L_v$ be the maximal vertical line segment inside $R$
that contains $q$. By an analogous argument, we conclude that no
point on $L_v$ is guarded by a horizontal line segment. Since $L_h$
and $L_v$ are maximal and have perpendicular orientations, $L_h$ and
$L_v$ intersect inside $R$. Therefore, no orthogonal line segment in
$T$ can guard the intersection point of $L_h$ and $L_v$, which is a
contradiction.
\end{proof}

Given $P$, let $H(P)$ denote the subset of the boundary of $P$
consisting of line segments that are immediately to the right of or
below $P$. Let $B(P)$ denote the partition of $H(P)$ into line
segments induced by the edges in $S(P)$. The following lemma follows
by Lemma~\ref{lem:guardingARectangle} and
Observations~\ref{obs:smallerParts} and~\ref{obs:movement}:
\begin{lemma}
\label{lem:minLengthExistence}%
Every orthogonal polygon $P$ has an optimal cover $T\subseteq B(P)$.
\end{lemma}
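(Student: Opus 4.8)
The plan is to take an arbitrary optimal cover $T$ and normalize it, \emph{without changing its total length}, into a cover all of whose segments lie in $B(P)$; since the length is preserved, the normalized cover is again optimal, which is exactly what the lemma asserts. The three available tools fit together for this: Observation~\ref{obs:smallerParts} lets me subdivide segments for free, Observation~\ref{obs:movement} lets me slide them to the boundary for free, and Lemma~\ref{lem:guardingARectangle} lets me carry out the coverage argument one rectangle of $R(P)$ at a time.

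First I would apply Observation~\ref{obs:smallerParts} to cut every segment of $T$ at each point where it meets an edge of $S(P)$. This yields a cover $T_1$ of the same total length in which every segment is contained in (the closure of) a single rectangle $R\in R(P)$. I would then fix a rectangle $R$ and invoke Lemma~\ref{lem:guardingARectangle}: some uniformly oriented subset of $T_1$ guards $R$ entirely, say a set of vertical segments (the horizontal case being symmetric, with ``right'' replaced by ``below''). Since these segments guard all of $R$, the union of their $y$-ranges is the full height of $R$. Now I would slide each of them rightward to the nearest boundary of $P$ as in Observation~\ref{obs:movement}. Each such segment lies inside $R$, so sliding it rightward keeps it inside $P$ until it reaches a vertical, right-facing edge of $\partial P$, that is, an edge of $H(P)$; moreover the horizontal strip swept during the slide stays inside $P$, and since the part of $R$ to the left of the segment also lies in $P$, the relocated boundary segment still sees the entire width of $R$ within its $y$-range. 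Hence the relocated segments again guard all of $R$.

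Performing this sliding for every rectangle produces a cover $T_2$ lying on $H(P)$ of the same total length as $T$ (recall that guarding all rectangles of $R(P)$ suffices to guard $P$). A final application of Observation~\ref{obs:smallerParts}, cutting each segment of $T_2$ at the partition points that $S(P)$ induces on $H(P)$, turns every segment into an element of $B(P)$ without altering the total length. The result is a cover $T'\subseteq B(P)$ whose total length equals that of the optimal cover $T$, so $T'$ is itself optimal, proving the lemma.

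The step I expect to require the most care is the geometric bookkeeping in the middle paragraph: checking that each slid segment indeed terminates on $H(P)$ and that, after the final cut, its pieces coincide exactly with elements of $B(P)$. In particular, the order of operations matters — the subdivision of Observation~\ref{obs:smallerParts} must precede the sliding of Observation~\ref{obs:movement}, since sliding an unsubdivided segment across a jagged stretch of $\partial P$ could leave part of it in the interior rather than on the boundary.
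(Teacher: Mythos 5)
Your overall plan --- cut with Observation~\ref{obs:smallerParts}, slide with Observation~\ref{obs:movement}, and argue rectangle by rectangle with Lemma~\ref{lem:guardingARectangle} --- is exactly the combination of results the paper invokes for this lemma, and your closing remark that the subdivision must precede the sliding is well taken. But your final step is false, and this is a genuine gap. Cutting a segment can only produce shorter segments; it can never turn a segment \emph{into} an element of $B(P)$. After your cut-and-slide, each segment of $T_2$ is merely a \emph{sub-segment} of some element of $B(P)$, and no further application of Observation~\ref{obs:smallerParts} changes that. Concretely: let $P$ be the unit square, so $B(P)$ consists of the right edge and the bottom edge, and take the optimal cover $\bigl\{\{1\}\times[0,\tfrac12],\ \{1\}\times[\tfrac12,1]\bigr\}$ (total length $1$, which is optimal, since its union is the right edge). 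Your procedure leaves this cover completely unchanged, yet neither of its two segments is an element of $B(P)$; so the set $T'$ you end with need not satisfy $T'\subseteq B(P)$.

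What is missing is a final ``rounding up'' (charging) step, and you already have the facts needed to justify it. For each rectangle $R$, your middle paragraph shows that the uniformly oriented subset guarding $R$, say vertical, has $y$-ranges covering the full height of $R$; after sliding, those segments all lie on the unique vertical element of $B(P)$ that sees $R$ (Observation~\ref{obs:oneToOneCorrespondence}), whose $y$-extent coincides with that of $R$. Hence that element is \emph{entirely} covered by segments of $T_2$ lying on it, so replacing those sub-segments by the single full element cannot increase the total length: the sum of their lengths is at least the length of their union, which is the whole element. Doing this for every rectangle --- i.e., keeping exactly those elements of $B(P)$ that $T_2$ covers entirely --- yields a set $T^*\subseteq B(P)$ that still covers $P$ and has total length at most that of $T_2$, i.e., at most the optimum, so $T^*$ is the desired optimal cover. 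Without this replacement argument the proof does not establish the lemma. (A smaller slip along the way: Lemma~\ref{lem:guardingARectangle} gives a uniformly oriented subset of the \emph{cover}, whose segments need not lie inside $R$; your sliding conclusion still holds, but because sliding to the boundary preserves everything a segment sees, as in Observation~\ref{obs:movement}, not because those segments sit inside $R$.)
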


\begin{observation}
\label{obs:oneToOneCorrespondence}%
Let $P$ be an orthogonal polygon and consider its corresponding set
$R(P)$ of rectangles induced by edges in $S(P)$. Every rectangle
$R\in R(P)$ is seen by exactly one vertical line segment in $B(P)$
and exactly one horizontal line segment in $B(P)$. Furthermore, if
$T\subseteq B(P)$ is a cover of $P$, then every rectangle in $R(P)$
must be seen by at least one horizontal or one vertical line segment
in $T$.
\end{observation}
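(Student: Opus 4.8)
The plan is to prove the statement in three stages, treating the two ``exactly one'' assertions first and then deducing the covering assertion from them together with Lemma~\ref{lem:guardingARectangle}. Throughout I would use the fact that every vertical segment of $B(P)$ lies on a boundary edge whose interior side is to its left (a \emph{right wall}) and every horizontal segment of $B(P)$ lies on a boundary edge whose interior side is above it (a \emph{floor}), since $H(P)$ consists exactly of the portions of $\partial P$ lying immediately to the right of or below $P$.

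First I would settle the single-point version. Fix a point $p$ in the interior of $P$, shoot a horizontal ray rightward from $p$, and let $q$ be the first boundary point it meets. Then $q$ lies on a right wall, the segment $pq$ is contained in $P$, and $q$ lies on a unique segment $v(p)\in B(P)$; hence $v(p)$ sees $p$. Moreover $v(p)$ is the \emph{only} vertical segment of $B(P)$ that sees $p$: a right wall lying beyond $q$ is blocked by $q$, and a right wall to the left of $p$ cannot see $p$ because the exterior of $P$ lies immediately to its right, so no horizontal subsegment of $P$ from $p$ can reach it. I would record this as a short standalone claim, since it is the engine of the whole argument.

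Second, I would show that $v(p)$ is constant as $p$ ranges over a fixed rectangle $R\in R(P)$; call the common value $v^*_R$. The map $p\mapsto v(p)$ can change only where the rightward ray first meets a different vertical edge of $S(P)$ or crosses an endpoint of a $B(P)$ segment, and in either case the transition occurs at a horizontal or vertical edge of $S(P)$. The key point --- and the step I expect to be the main obstacle --- is to rule out any such transition at a height strictly between the top and bottom edges of $R$. This is exactly where the construction of $S(P)$ is used: any ``jog'' in the wall seen from $R$ is caused by a reflex vertex (or a $T$-junction) at some height $y_0$ interior to the $y$-range of $R$, and the corresponding horizontal edge of $S(P)$ through that feature extends leftward, remaining inside $P$ (since the horizontal segments from points of $R$ at nearby heights already reach at least as far right as that feature) all the way back to $R$, hence crossing the interior of $R$ --- contradicting that $R$ is a cell of the partition, whose interior contains no edge of $S(P)$. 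Thus $v^*_R$ is well defined, it sees every point of $R$, and by the uniqueness from the first stage no other vertical segment of $B(P)$ sees any point of $R$. The symmetric argument, shooting rays downward to floors, yields a unique horizontal segment $h^*_R\in B(P)$ with the analogous properties, establishing the first two assertions.

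Finally, for the covering assertion, let $T\subseteq B(P)$ be a cover of $P$ and fix $R\in R(P)$. By Lemma~\ref{lem:guardingARectangle} there is a subset $T'\subseteq T$ whose segments share a common orientation and together guard $R$. If the segments of $T'$ are vertical, then each point of $R$ is seen by some vertical segment of $T'\subseteq B(P)$; but by the second stage the only vertical segment of $B(P)$ seeing any point of $R$ is $v^*_R$, so $v^*_R\in T'$ and $v^*_R$ alone sees all of $R$. If instead the segments of $T'$ are horizontal, the same reasoning forces $h^*_R\in T$. In either case $R$ is seen by at least one vertical or one horizontal segment of $T$, as required.
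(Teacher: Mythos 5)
Your proposal is correct, but note that the paper never actually proves this statement: it is asserted as an ``observation'' deemed immediate from the construction of $S(P)$, $R(P)$ and $B(P)$, so there is no paper proof to match against. What you supply is a genuine proof, and its crux is exactly the right one: the first-hit ray-shooting map is well defined on the interior of a cell, and any ``jog'' in the wall seen from $R$ would force a horizontal edge of $S(P)$ (the extension through the offending reflex vertex) to pass through the interior of $R$, contradicting that $R$ is a cell of the arrangement induced by $S(P)$. This is the essential reason the observation holds, and the paper leaves it implicit. Two minor caveats, neither fatal. First, your stage-1 uniqueness claim is stated for an arbitrary interior point $p$ of $P$, and in that generality it has degenerate exceptions: if the rightward ray first hits a reflex vertex, the segment can continue along the incident floor edge (recall $P$ is closed), so a farther wall may also see $p$; likewise if the ray hits a shared endpoint of two stacked segments of $B(P)$, both see $p$. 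Both degeneracies are excluded for $p$ interior to a cell by the very same crossing argument you use in stage 2, so you should simply state stage 1 for cell-interior points (similarly, ``no other vertical segment sees \emph{any} point of $R$'' should be weakened to ``any \emph{interior} point of $R$,'' since a neighboring $B(P)$ segment can see points on the top or bottom edge of $R$; the ``exactly one'' count in the statement concerns segments seeing \emph{all} of $R$, so this costs nothing). Second, your stage 3 detour through Lemma~\ref{lem:guardingARectangle} is unnecessary: given the uniqueness from stage 2, any segment of $T$ seeing a single interior point of $R$ must already be $v^*_R$ or $h^*_R$, each of which sees all of $R$; invoking the lemma is harmless but weaker than what you have already established.
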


We denote the horizontal and vertical line segments in $B(P)$ that
can see a rectangle $R\in R(P)$ by $R_V$ and $R_H$, respectively.
Observation~\ref{obs:oneToOneCorrespondence} leads us to reducing
the problem to the minimum-weight vertex cover problem on graphs. We
construct an undirected weighted graph $G_P=(V, E)$ associated with
$P$ as follows: each line segment $s\in B(P)$ corresponds to a
vertex $v_s\in V$ such that the weight of $v_s$ is the length of $s$
(we denote the vertex in $V$ that corresponds to the line segment
$s\in B(P)$ by $v_s$). Two vertices $v_s, v_{s'}\in V$ are adjacent
in $G_P$ if and only if the line segments $s$ and $s'$ can both see
a fixed rectangle $R\in R(P)$. See Figure~\ref{fig:reductionExample}
for an illustration of the reduction. By
Observation~\ref{obs:oneToOneCorrespondence} the following result is
straightforward:
\begin{observation}
\label{obs:rectanglesEdgesCorrespondence}%
For each rectangle $R\in R(P)$, there exists exactly one edge in
$G_P$ that is correspond to $R$.
\end{observation}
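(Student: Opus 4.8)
The plan is to read the claim off directly from Observation~\ref{obs:oneToOneCorrespondence} together with the definition of $G_P$. The notion of an edge \emph{corresponding to} $R$ is the edge of $G_P$ whose two endpoints are the vertices of the segments in $B(P)$ that can see $R$, so what I must establish is that exactly one such edge exists.

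First I would pin down which segments of $B(P)$ see a given rectangle $R \in R(P)$. By Observation~\ref{obs:oneToOneCorrespondence}, $R$ is seen by exactly one horizontal segment $R_V \in B(P)$ and exactly one vertical segment $R_H \in B(P)$; consequently the set of segments in $B(P)$ that see $R$ is precisely $\{R_V, R_H\}$, a set of size two whose two members have perpendicular orientations. In particular $R_V \neq R_H$, so $v_{R_V}$ and $v_{R_H}$ are two distinct vertices of $G_P$.

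For existence, since both $R_V$ and $R_H$ see the fixed rectangle $R$, the adjacency rule in the definition of $G_P$ places an edge between $v_{R_V}$ and $v_{R_H}$, and this edge corresponds to $R$. For uniqueness, suppose an edge $\{v_s, v_{s'}\}$ corresponds to $R$, i.e.\ the segments $s$ and $s'$ both see $R$; then $s, s' \in \{R_V, R_H\}$ by the previous paragraph, and since the two endpoints of an edge are distinct this forces $\{s, s'\} = \{R_V, R_H\}$. Hence the edge is exactly $\{v_{R_V}, v_{R_H}\}$, and there is precisely one edge of $G_P$ corresponding to $R$.

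The argument is essentially bookkeeping on top of Observation~\ref{obs:oneToOneCorrespondence}, so there is no deep obstacle. The only point that needs genuine care is confirming that the two segments seeing $R$ really are distinct and of opposite orientation, so that they yield a bona fide edge rather than collapse to a single vertex; this is exactly what the ``exactly one of each orientation'' portion of Observation~\ref{obs:oneToOneCorrespondence} supplies.
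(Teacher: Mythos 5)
Your proposal is correct and follows exactly the paper's route: the paper states this observation as an immediate consequence of Observation~\ref{obs:oneToOneCorrespondence} together with the adjacency rule defining $G_P$, which is precisely the bookkeeping you carry out. Your write-up merely makes explicit the existence/uniqueness details (distinctness of $R_V$ and $R_H$, and that any edge corresponding to $R$ must join exactly these two vertices) that the paper leaves as ``straightforward.''
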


\begin{figure}[t]
\centering%
\includegraphics[width=0.80\textwidth]{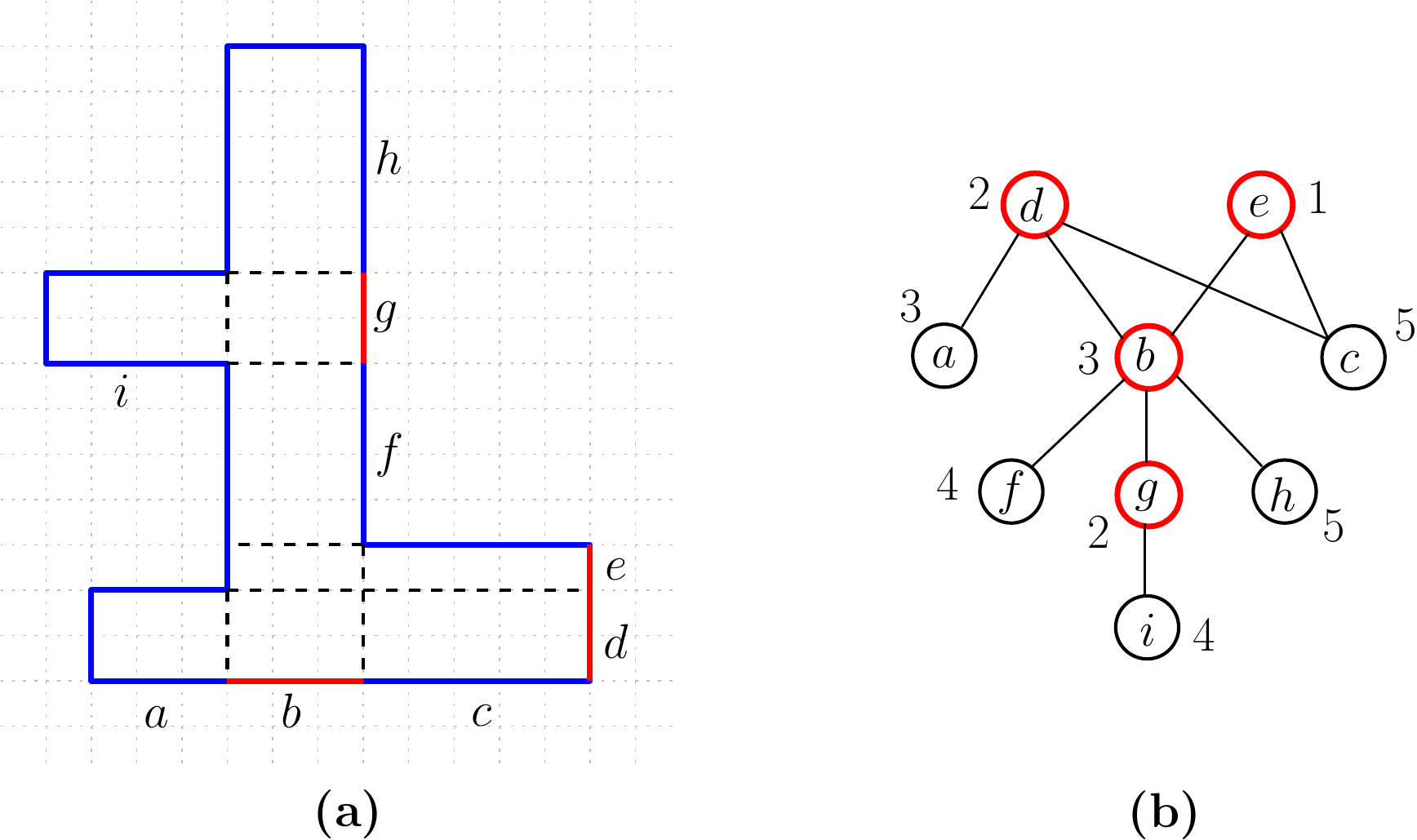}
\caption{An illustration of the reduction; each grid cell has size
$1\times1$. (a) An orthogonal polygon $P$ along with the elements of
$E(P)\cup S(P)$ labeled as $a,b,c,\dots,i$. (b) The graph $G_P$
associated with $P$; the integer value besides each vertex indicates
the weight of the vertex. The vertices of a minimum-weight vertex
cover on $G_P$ and their corresponding guarding line segments for
$P$ are shown in red.}
\label{fig:reductionExample}%
\end{figure}

We now prove the following result:
\begin{theorem}
\label{thm:minLengthVCoverEquiv}%
The MLSC problem on $P$ reduces to the minimum-weight vertex cover
problem on $G_P$.
\end{theorem}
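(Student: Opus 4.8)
The plan is to establish a weight-preserving correspondence between covers of $P$ drawn from $B(P)$ and vertex covers of $G_P$, and then to argue that this correspondence carries optimal solutions to optimal solutions. First I would fix the natural map: to every subset $T\subseteq B(P)$ associate the vertex set $V'=\{v_s : s\in T\}\subseteq V$, and conversely to every $V'\subseteq V$ associate $T=\{s : v_s\in V'\}$. Because the weight of each vertex $v_s$ was defined to equal the length of $s$, the total weight of $V'$ equals the total length of trajectories in $T$ exactly; no arithmetic beyond this identity is needed. The construction of $G_P$ is clearly computable in time polynomial in $n$, since $|B(P)|=O(n)$ and $|R(P)|=O(n^2)$, so once the correspondence is shown to preserve feasibility and cost the reduction follows.

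The heart of the argument is to prove that $T\subseteq B(P)$ is a cover of $P$ if and only if $V'=\{v_s:s\in T\}$ is a vertex cover of $G_P$. For the forward direction I would invoke Observation~\ref{obs:oneToOneCorrespondence}: if $T$ is a cover, then every rectangle $R\in R(P)$ is seen by at least one segment of $T$, and since the only two segments of $B(P)$ that can see $R$ are its horizontal and vertical witnesses $R_V$ and $R_H$, at least one of $v_{R_V},v_{R_H}$ lies in $V'$. By Observation~\ref{obs:rectanglesEdgesCorrespondence} the unique edge of $G_P$ associated with $R$ joins precisely $v_{R_V}$ and $v_{R_H}$, so that edge is covered; as every edge arises from some rectangle, $V'$ is a vertex cover. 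For the converse, if $V'$ is a vertex cover then each edge—hence each rectangle $R$—has an endpoint in $V'$, so $R$ is seen by a segment of $T$; since guarding all rectangles of $R(P)$ suffices to guard $P$, the set $T$ is a cover of $P$.

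Finally I would combine this equivalence with Lemma~\ref{lem:minLengthExistence}, which guarantees that $P$ admits an \emph{optimal} cover lying entirely within $B(P)$. This is the step that makes the reduction exact rather than merely sound: it ensures that minimizing total trajectory length over all covers of $P$ is achieved by some cover inside $B(P)$, and therefore that the minimum length of a cover of $P$ equals the minimum weight of a vertex cover of $G_P$. Recovering a minimum-weight vertex cover $V'$ of $G_P$ and returning $T=\{s:v_s\in V'\}$ thus yields an optimal MLSC solution for $P$.

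I do not expect a serious obstacle here, since the conceptual work has already been absorbed into Lemmas~\ref{lem:guardingARectangle} and~\ref{lem:minLengthExistence} and the accompanying observations; the proof is essentially bookkeeping. The one point requiring care is to verify \emph{both} directions of the cover/vertex-cover equivalence together with the weight identity, and to cite Lemma~\ref{lem:minLengthExistence} explicitly so that the optimum of the vertex-cover instance is certified to coincide with the optimum of the original length-minimization problem rather than only bounding it.
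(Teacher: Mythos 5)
Your proof is correct, but it is organized differently from the paper's, and the difference is worth noting. The paper's proof is in two parts, each handling \emph{arbitrary} objects: Part~1 converts any vertex cover of $G_P$ into a cover of $P$ of equal total length (placing, for each edge, a boundary segment aligned with a chosen endpoint's segment), and Part~2 converts any cover $C$ of $P$ --- not necessarily drawn from $B(P)$ --- into a vertex cover, by re-running the regularization and partitioning machinery inline (Observations~\ref{obs:smallerParts} and~\ref{obs:movement} together with Lemma~\ref{lem:guardingARectangle}); notably, the paper never cites Lemma~\ref{lem:minLengthExistence} in this proof. You instead restrict attention to covers $T\subseteq B(P)$, prove the exact feasibility-and-weight-preserving correspondence with vertex covers via Observations~\ref{obs:oneToOneCorrespondence} and~\ref{obs:rectanglesEdgesCorrespondence}, and then invoke Lemma~\ref{lem:minLengthExistence} to certify that the global optimum over all covers is attained inside $B(P)$, so the two optima coincide. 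Your route buys two things: it puts Lemma~\ref{lem:minLengthExistence} to its evidently intended use rather than leaving it idle, and it sidesteps a delicate accounting point in the paper's Part~2 --- there, the pieces of the partitioned regular cover $M$ may be proper subsegments (or repeated subsegments) of elements of $B(P)$, so the claimed equality $\sum_{v\in S}w(v)=\sum_{s\in C}len(s)$ need not hold verbatim for a non-optimal cover $C$; your argument only requires equality at the optimum, where Lemma~\ref{lem:minLengthExistence} supplies it exactly. What the paper's version nominally buys in exchange is a per-cover (rather than only at-optimum) correspondence, which is stronger than what the reduction actually needs.
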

\begin{proof}
Let $S_0$ be a vertex cover of $G_P$ and let $C_0$ be a cover of $P$
defined in terms of $S_0$; the mapping from $S_0$ to $C_0$ will be
defined later. Moreover, for each vertex $v$ of $G_P$ let $w(v)$
denote the weight of $v$ and for each line segment $s\in C_0$ let
$len(s)$ denote the length of $s$. We need to prove that $S_0$ is a
minimum-weight vertex cover of $G_P$ if and only if $C_0$ is an
optimal cover of $P$. We show the following stronger statements:
\begin{itemize}
\item for any vertex cover $S$ of $G_P$, there exists a cover $C$ of $P$
such that $\sum_{s\in C}len(s)=\sum_{v\in S}w(v)$, and
\item for any cover $C$ of $P$, there exists a vertex cover $S$
of $G_P$ such that $\sum_{v\in S}w(v)=\sum_{s\in C}len(s)$.
\end{itemize}
\noindent{\bf Part 1.} Choose any vertex cover $S$ of $G_P$. We find
a cover $C$ for $P$ as follows: for each edge $(v_s, v_{s'})\in E$,
if $v_s\in S$ we locate a guarding line segment on the boundary of
$P$ that is aligned with the line segment $s\in B(P)$. Otherwise, we
locate a guarding line segment on the boundary of $P$ that is
aligned with the line segment $s'\in B(P)$. Since at least one of
$v_s$ and $v_{s'}$ is in $S$, we conclude by
Observation~\ref{obs:rectanglesEdgesCorrespondence} that every
rectangle in $R(P)$ is guarded by at least one line segment located
on the boundary of $P$ and so $C$ is a cover of $P$. Moreover, for
each vertex in $S$ we locate exactly one guarding line segment on
the boundary of $P$ whose length is the same as the weight of the
vertex. Therefore, $\sum_{s\in C}len(s)=\sum_{v\in S}w(v)$.

\noindent{\bf Part 2.} Choose any cover $C$ of $P$. We construct a
vertex cover $S$ for $G_P$ as follows. By
Observation~\ref{obs:movement}, let $T'$ be the regular cover
obtained from $C$. Moreover, let $M$ be the partition of $T'$ into
line segments induced by the edges in $S(P)$. By
Lemma~\ref{lem:guardingARectangle}, for any rectangle $R\in R(P)$,
there exists a set $C'_R\subseteq C$ such that all line segments in
$C'_R$ have the same orientation and collectively guard $R$.
Therefore, $M$ is also a cover of $P$. Now, let $S$ be the subset of
the vertices of $G_P$ such that $v_s\in S$ if and only if $s\in M$.
Since $M$ is a cover of $G_P$ we conclude, by
Observation~\ref{obs:rectanglesEdgesCorrespondence}, that $S$ is a
vertex cover of $G_P$. Moreover, we observe that the total weight of
the vertices in $S$ is the same as the total length of the line
segments in $M$ and, therefore, $\sum_{v\in S}w(v)=\sum_{s\in
C}len(s)$.
\end{proof}
We next show that the graph $G_P$ is bipartite.
\begin{lemma}
\label{lem:gIsBipartite}%
Graph $G_P$ is bipartite.
\end{lemma}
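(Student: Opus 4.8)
The plan is to show bipartiteness by exhibiting a two-coloring of the vertices of $G_P$ based on the orientation of the line segments they represent. Recall that each vertex $v_s$ of $G_P$ corresponds to a line segment $s \in B(P)$, and every such segment is either horizontal or vertical. My proposed partition is $V = V_H \cup V_V$, where $V_H$ consists of the vertices corresponding to horizontal segments in $B(P)$ and $V_V$ consists of those corresponding to vertical segments. To prove this is a valid bipartition, I need to show that every edge of $G_P$ joins a vertex in $V_H$ to a vertex in $V_V$; equivalently, that no two segments of the same orientation are adjacent in $G_P$.

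The key step is to recall the adjacency rule: two vertices $v_s, v_{s'}$ are adjacent if and only if $s$ and $s'$ can both see a common rectangle $R \in R(P)$. By Observation~\ref{obs:oneToOneCorrespondence}, every rectangle $R \in R(P)$ is seen by exactly one vertical segment in $B(P)$ (namely $R_V$) and exactly one horizontal segment in $B(P)$ (namely $R_H$). Hence the only two segments in $B(P)$ that can see $R$ are $R_H$ and $R_V$, which have opposite orientations. Consequently, if $v_s$ and $v_{s'}$ are adjacent, witnessed by a common rectangle $R$, then $\{s, s'\} = \{R_H, R_V\}$, so one of $s, s'$ is horizontal and the other is vertical. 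This places one endpoint of the edge in $V_H$ and the other in $V_V$.

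Since every edge of $G_P$ therefore crosses between $V_H$ and $V_V$, there are no edges internal to either part, and $G_P$ is bipartite with respect to this partition. The argument is essentially immediate once Observation~\ref{obs:oneToOneCorrespondence} is invoked, so I do not anticipate a genuine obstacle here; the only point requiring a moment's care is confirming that the adjacency of two vertices is always witnessed by a rectangle having those two segments as its unique horizontal and vertical guards, which is exactly the content of that observation combined with the definition of $E$. The whole proof thus reduces to citing the uniqueness of $R_H$ and $R_V$ per rectangle and observing they differ in orientation.
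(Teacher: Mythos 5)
Your proof is correct and follows essentially the same route as the paper's: both partition the vertices by the orientation (horizontal versus vertical) of the corresponding segments in $B(P)$ and then invoke the fact, from Observation~\ref{obs:oneToOneCorrespondence}, that each rectangle in $R(P)$ is seen by exactly one vertical and one horizontal segment of $B(P)$, so every edge joins segments of opposite orientations. Your write-up is somewhat more explicit than the paper's terse justification, but there is no substantive difference in the argument.
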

\begin{proof}
The proof follows from the facts that \begin{inparaenum}[(i)]\item
we have two types of edges in $G_P$; those that correspond to the
vertical line segments in $B(P)$ and those that correspond to the
horizontal line segments in $B(P)$, and that \item no two vertical
line segments in $B(P)$ nor any two horizontal line segments in
$B(P)$ can see a fixed rectangle in $R(P)$.\end{inparaenum}
\end{proof}
It is easy to see that the construction in the proof of
Theorem~\ref{thm:minLengthVCoverEquiv} can be completed in
polynomial time. Therefore, by
Theorem~\ref{thm:minLengthVCoverEquiv}, Lemma~\ref{lem:gIsBipartite}
and the fact that minimum-weight vertex cover is solvable in
polynomial time on bipartite graphs \cite{wikipediaVC}, we have the
main result of this section:
\begin{theorem}
\label{thm:optimalAlgorithmForMinLength}%
Given an orthogonal polygon $P$ with $n$ vertices, there exists an
algorithm that finds an optimal cover of $P$ in time polynomial in
$n$.
\end{theorem}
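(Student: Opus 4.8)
The plan is to assemble the structural results already established in this section into a concrete algorithm and then bound its running time. By Theorem~\ref{thm:minLengthVCoverEquiv} the MLSC problem on $P$ is equivalent to computing a minimum-weight vertex cover of $G_P$, and by Lemma~\ref{lem:gIsBipartite} the graph $G_P$ is bipartite, on which minimum-weight vertex cover is solvable in polynomial time. Thus the only work remaining is to show that $G_P$ can be built in time polynomial in $n$ and that a solution on $G_P$ can be translated back into an actual cover of $P$ efficiently.

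First I would bound the sizes of the combinatorial objects underlying $G_P$. Since $P$ has $n$ vertices it has $O(n)$ reflex vertices, so extending inward the two edges incident to each reflex vertex produces $O(n)$ extended edges; together with the original edges this gives $|S(P)| = O(n)$. These $O(n)$ edges cut the boundary portion $H(P)$ at $O(n)$ points, so $|B(P)| = O(n)$, and they partition $P$ into $|R(P)| = O(n^2)$ rectangles. Each of these objects is computable in polynomial time by a straightforward plane sweep. Consequently $G_P$ has $O(n)$ vertices, one per segment of $B(P)$, and, by Observation~\ref{obs:rectanglesEdgesCorrespondence}, exactly $|R(P)| = O(n^2)$ edges, one per rectangle. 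The incidences needed to determine adjacencies, namely the pair of segments seeing each rectangle $R$, are likewise computable in polynomial time, so $G_P$ is constructed within a polynomial time bound.

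Next I would solve minimum-weight vertex cover on $G_P$. Because $G_P$ is bipartite, the linear-programming relaxation of the problem is integral owing to the total unimodularity of its constraint matrix, so an optimal integral vertex cover $S_0$ is obtained in polynomial time, for instance via a maximum-flow computation. Applying the mapping from Part~1 of the proof of Theorem~\ref{thm:minLengthVCoverEquiv} to $S_0$ yields a set $C_0$ of guarding segments on the boundary of $P$ with $\sum_{s\in C_0}len(s)=\sum_{v\in S_0}w(v)$; this step is again polynomial, as it places one segment per chosen vertex. The equivalence established in Theorem~\ref{thm:minLengthVCoverEquiv} then guarantees that $C_0$ is an optimal cover of $P$ precisely because $S_0$ is a minimum-weight vertex cover.

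I expect the point requiring the most care to be the combinatorial size bounds, in particular arguing that extending edges only from reflex vertices keeps $|S(P)|$, and hence the vertex set of $G_P$, linear in $n$ while the edge set stays quadratic, since the polynomial running time of the whole algorithm hinges on these estimates. The remaining steps are routine invocations of the results stated above.
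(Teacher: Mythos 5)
Your proposal is correct and follows essentially the same route as the paper, which likewise derives the theorem by combining Theorem~\ref{thm:minLengthVCoverEquiv}, Lemma~\ref{lem:gIsBipartite}, and the polynomial solvability of minimum-weight vertex cover on bipartite graphs. The only difference is that you spell out the size bounds ($|S(P)|=O(n)$, $|B(P)|=O(n)$, $|R(P)|=O(n^2)$) and the flow-based vertex-cover computation explicitly, which the paper dismisses as ``easy to see.''
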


\section{The MCSC Problem: \textsc{NP}-hardness Result}
\label{sec:minimumCardinality}%
In this section, we show that the following problem is
\textsc{NP}-hard.

\myprob{\textsc{MCSC With Holes}}
{An orthogonal polygon $P$ possibly with holes.}%
{An optimal solution for the MCSC problem on $P$.}

We show \textsc{NP}-hardness by a reduction from the \emph{minimum
hitting of horizontal unit segments} problem, which we call it the
\textsc{Min Segment Hitting} problem. The \textsc{Min Segment
Hitting} problem is defined as follows.
\begin{definition}[Hassin and Meggido~\cite{hassin1991},
1991]
\label{def:minimumHitProblem}%
Given $n$ pairs $(a_i, b_i)$, $i=1, \dots, n$, of integers and an
integer $k$, decide whether there exist $k$ orthogonal lines $l_1,
\dots, l_k$ in the plane such that each line segment $[(a_i, b_i),
(a_i+1, b_i)]$ is hit by at least one of the lines.
\end{definition}
Hassin and Meggido~\cite{hassin1991} prove that the \textsc{Min
Segment Hitting} problem is \textsc{NP}-complete. Let $I$ be an
instance of the \textsc{Min Segment Hitting} problem, where $I$ is a
set of $n$ horizontal unit-length segments. We construct an
orthogonal polygon $P$ (with holes) such that there exists a set of
$k$ orthogonal lines that hit the segments in $I$ if and only if
there exists a set $C$ of $k+1$ orthogonal line segments inside $P$
that collectively guard $P$. Throughout this section, we refer to
the segments in $I$ as \emph{unit segments} and to the segments in
$C$ as \emph{line segments}.\\

\begin{figure}[t]
\centering%
\includegraphics[width=0.70\textwidth]{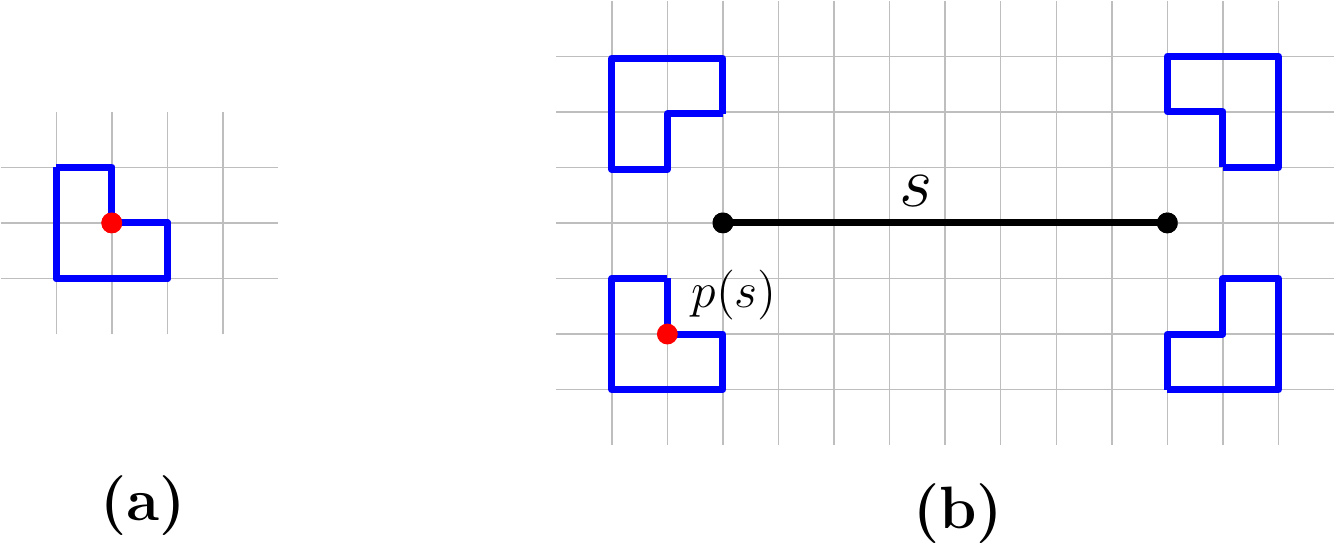}
\caption{(a) An $L$-hole gadget; each grid cell has size
$\frac{1}{12}\times\frac{1}{12}$. (b) The $L$-holes associated with
a line segment $s\in I$, where the $x$-coordinate of $a_s$ is even.}
\label{fig:initialGadgets}%
\end{figure}

\noindent{\bf Gadgets.} We first observe that any two unit segments
in $I$ can share at most one point, which must be a common endpoint
of the two unit segments. Let $s$ be a unit segment in $I$. We
denote the left and right endpoints of $s$ by $a_s$ and $b_s$,
respectively. Moreover, let $N(s)$ denote the set of unit segments
in $I$ that have at least one endpoint with $x$-coordinate equal to
that of $a_s$ or $b_s$. Our reduction refers to an $L$-hole, which
we define as a minimum-area orthogonal polygon with vertices at grid
coordinates such that exactly one of which is a reflex vertex.
Figure~\ref{fig:initialGadgets}(a) shows an $L$-hole. We constrain
each grid cell to have size $\frac{1}{12}\times\frac{1}{12}$. An
$L$-hole may be rotated by $\pi/2$, $\pi$ or $3\pi/2$. For each unit
segment $s\in I$, we associate exactly four $L$-holes with $s$
depending on the parity of the $x$-coordinate of $a_s$:
\begin{itemize}
\item If the $x$-coordinate of $a_s$ is even, then
Figure~\ref{fig:initialGadgets}(b) shows the $L$-holes associated
with $s$. The $L$-holes associated with $s$ do not interfere with
the $L$-holes associated with the line segments in $N(s)$ because
the unit segments in $N(s)$ have the vertical distance at least one
to $s$. Note the red vertex on the bottom left $L$-hole of $s$; we
call this vertex the \emph{visibility vertex} of $s$, which we
denote $p(s)$.
\item If the $x$-coordinate of $a_s$ is odd, then
Figure~\ref{fig:holesForOddUnitSegments} shows the $L$-holes
associated with $s$. Note that, in this case, the $L$-holes are
located such that the vertical distance between any point on an
$L$-hole and $s$ is at least $3/12$. By an analogous argument, we
observe that the $L$-holes associated with $s$ do not interfere with
the $L$-holes associated with the line segments in $N(s)$. Note the
blue vertex on the bottom right $L$-hole of $s$; we call this vertex
the \emph{visibility vertex} of $s$, which we denote $p(s)$.
\end{itemize}

Let $s$ and $s'$ be two unit segments in $I$ that share a common
endpoint. Since $s$ and $s'$ have unit lengths the $x$-coordinates
of $a_s$ and $a_{s'}$ have different parities. Therefore, the
$L$-holes associated with $s$ and $s'$ do not interfere with each
other. Figure~\ref{fig:multipleSegments} shows an example of two
unit segments $s$ and $s'$ and their corresponding $L$-holes. We now
describe the reduction.\\

\noindent{\bf Reduction.} Given an instance $I$ of the \textsc{Min
Segment Hitting} problem, we first associate each unit segment in
$s\in I$ with four $L$-holes depending on whether the $x$-coordinate
of $a_s$ is even or odd. After adding the corresponding $L$-holes,
we enclose $I$ in a rectangle such that the all unit segments and
the $L$-holes associated with them lie in its interior. Finally, we
create a small rectangle on the bottom left corner of the bigger
rectangle (see Figure~\ref{fig:reductionExample}) such that any
orthogonal line that passes through the smaller rectangle cannot
intersect any of the unit segments in $I$. See
Figure~\ref{fig:reductionExample} for a complete example of the
reduction. Let $P$ be the resulting orthogonal polygon. We first
have the following observation.

\begin{figure}[t]
\centering%
\subfloat[The $L$-holes associated with a line segment $s\in I$,
where the $x$-coordinate of $a_s$ is odd.]
{\label{fig:holesForOddUnitSegments}
\includegraphics[width=0.37\textwidth]{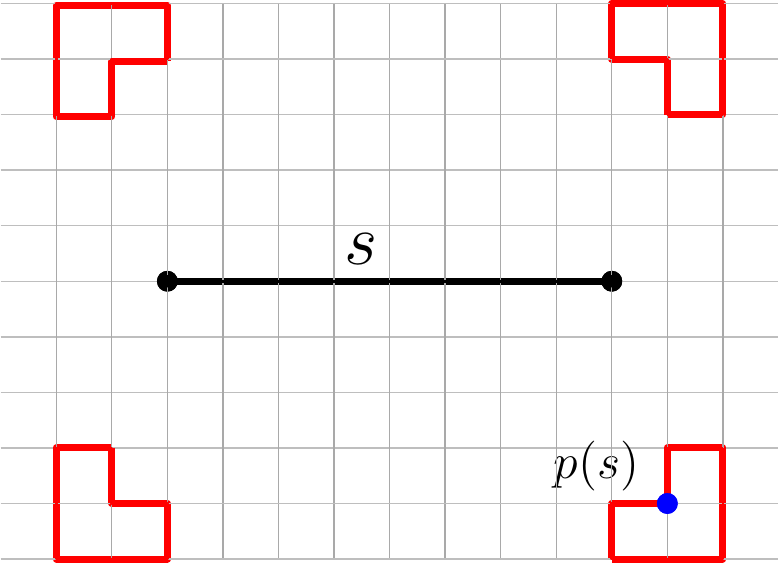}}
~\quad%
\subfloat[An illustration of the $L$-holes associated with two line
segments in $I$ that share a common endpoint.]
{\label{fig:multipleSegments}
\includegraphics[width=0.50\textwidth]{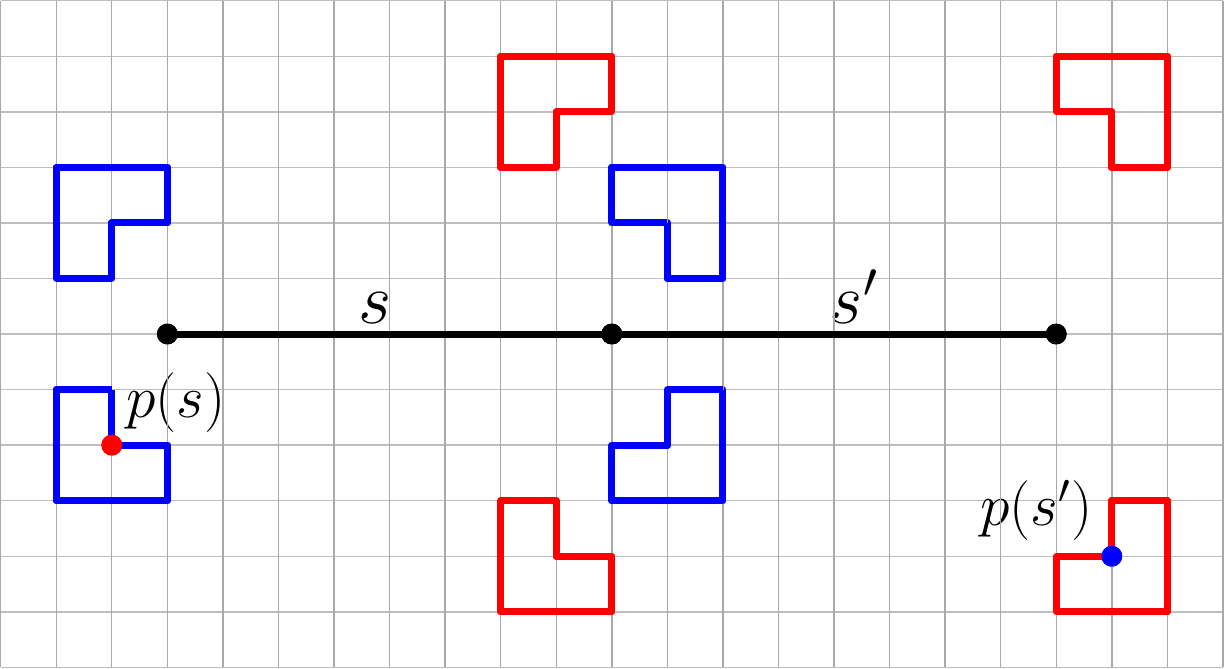}}
\caption{An illustration of the gadgets used in the reduction.}
\label{fig:oneSegment}%
\end{figure}

\begin{observation}
\label{obs:verticalOnlyOneGuarding}%
Let $s$ be a unit segment in $I$. Moreover, let $l$ be a vertical
line segment contained in $P$ that can see $p(s)$. If $l$ does not
intersect $s$, then $p(s')$ is not orthogonally visible to $l$ for
all $s'\in I\setminus \{s\}$.
\end{observation}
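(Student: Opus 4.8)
The plan is to establish Observation~\ref{obs:verticalOnlyOneGuarding} by a careful geometric analysis of the $L$-hole gadgets surrounding each unit segment, showing that the gadgets confine the horizontal extent of any vertical line segment that can simultaneously see the visibility vertex $p(s)$ yet avoid crossing $s$. First I would fix the vertical line segment $l$ and recall that $l$ sees $p(s)$ means there is a horizontal normal segment from $l$ to $p(s)$ lying entirely inside $P$. Since $p(s)$ is placed as the reflex (red or blue) vertex on one of the $L$-holes immediately adjacent to $s$, the horizontal visibility from $l$ to $p(s)$ pins down the $x$-coordinate of $l$ to lie within a narrow horizontal band, determined by the $\frac{1}{12}$-scale geometry of the $L$-holes and by the hypothesis that $l$ does not intersect $s$ (which forbids $l$ from lying in the horizontal span of $s$ itself).

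The key steps, in order, are as follows. I would first determine exactly which $x$-coordinates a vertical segment $l$ may occupy given that it sees $p(s)$ but misses $s$: the design of the four $L$-holes around $s$ (Figures~\ref{fig:initialGadgets}(b) and~\ref{fig:holesForOddUnitSegments}) is precisely intended to leave only a thin vertical corridor, off to one side of $s$, through which $p(s)$ is orthogonally visible. Second, I would argue that any $l$ confined to this corridor cannot reach any other visibility vertex $p(s')$ for $s'\neq s$. This splits into two cases. For $s'$ whose $x$-coordinates differ substantially from those of $s$ (the generic case), the corridor for $p(s)$ simply does not overlap the $x$-range in which $p(s')$ lives, so no vertical line can be aligned with both. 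For $s'\in N(s)$ — the delicate case of unit segments sharing an endpoint $x$-coordinate with $s$ — I would invoke the non-interference property already established in the gadget construction: because $s$ and $s'$ have left endpoints of opposite parity, their $L$-holes are offset (the odd case sits at vertical distance at least $3/12$ from its segment) so that the corridors admitting visibility of $p(s)$ and of $p(s')$ occupy disjoint horizontal positions. Third, even when $x$-ranges are close, I would check that the $L$-holes block vertical visibility: a segment $l$ positioned to see $p(s)$ has its vertical travel obstructed, so it cannot extend far enough vertically to also see $p(s')$ without passing through a hole or through $s$.

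I expect the main obstacle to be the case analysis for $s'\in N(s)$, where the horizontal coordinates are genuinely close and one must argue quantitatively, using the $\frac{1}{12}$ grid scale and the deliberate odd/even offset of the gadgets, that the thin visibility corridor for $p(s)$ is strictly separated from that of $p(s')$. Here I would lean on the already-proved non-interference of the $L$-holes associated with $s$ versus those associated with segments in $N(s)$, together with the explicit placement that guarantees vertical distance at least one (even case) or at least $3/12$ (odd case) between a segment and the $L$-holes of its neighbors. The remaining cases, where $s'$ is far from $s$, are essentially immediate once the confinement of $l$ to the narrow corridor is established. In summary, the proof reduces to showing that the gadget geometry forces $l$ into a corridor so narrow that the only visibility vertex it can reach is $p(s)$ itself.
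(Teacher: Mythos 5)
The paper itself offers no written proof of Observation~\ref{obs:verticalOnlyOneGuarding}; it is asserted as immediate from the gadget geometry of Figures~\ref{fig:initialGadgets} and~\ref{fig:holesForOddUnitSegments}. Your plan---confine any vertical segment $l$ that sees $p(s)$ but misses $s$ to a narrow region near the gadget of $s$, then argue that no other visibility vertex is reachable from that region---is the right general shape for such a proof. However, it contains a concrete error at its very first step. You claim that the hypothesis that $l$ does not intersect $s$ ``forbids $l$ from lying in the horizontal span of $s$ itself.'' That is false: a vertical segment whose $x$-coordinate lies between those of $a_s$ and $b_s$, but which stays strictly below (or strictly above) the horizontal line through $s$, does not intersect $s$. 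This is not a peripheral case---it is exactly the situation the paper's own footnote flags (``it might be possible that a line segment guards the visibility vertex of a unit segment but does not intersect the unit segment''), since $p(s)$ sits on an $L$-hole below $s$ and is naturally seen by short vertical segments tucked under $s$, inside its $x$-span. By excluding that region at the outset, your corridor analysis is aimed at the wrong set of candidate positions for $l$, and the ensuing case analysis (generic $s'$ versus $s'\in N(s)$) inherits the omission.

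There is a second gap in how you rule out visibility of other vertices. For a vertical segment, seeing $p(s')$ requires only that some horizontal segment from $l$ to $p(s')$ lie inside $P$; the horizontal distance between $l$ and $p(s')$ is irrelevant unless something actually blocks that ray. The unit segments of $I$ are not part of the polygon and block nothing; only the $L$-holes and the outer boundary can obstruct. So observing that the corridor for $p(s)$ ``does not overlap the $x$-range in which $p(s')$ lives'' proves nothing by itself: you must show that every horizontal ray from a point of $l$ to $p(s')$ is cut by an $L$-hole or the boundary, and this requires using the actual orientation of the hole on which $p(s')$ sits (reflex vertex of the bottom-left hole in the even case, bottom-right in the odd case). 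Relatedly, the ``non-interference'' property you invoke is, in the paper, a statement that the holes of neighbouring gadgets do not geometrically collide, so that the construction is well defined; it says nothing about visibility corridors being disjoint, and hence cannot carry the delicate $s'\in N(s)$ case. To close the argument you would need to work through the explicit $\frac{1}{12}$-grid coordinates of the four $L$-holes and verify these blocking claims directly; as written, your plan asserts the conclusion of the observation precisely at the points where the geometry has to be checked.
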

We now show the following result.
\begin{lemma}
\label{lem:npHardnessIff}%
There exist $k$ orthogonal lines such that each unit segment in $I$
is hit by one of the lines if and only if there exists $k+1$
orthogonal line segments contained in $P$ that collectively guard
$P$.
\end{lemma}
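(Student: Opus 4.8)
The statement is an equivalence between instances, so I would prove the two directions separately, exploiting the natural correspondence between an axis-parallel hitting line and a camera trajectory of the same orientation: a vertical line $x=c$ that hits a unit segment $s$ corresponds to a vertical camera whose trajectory crosses $s$, and by the design of the four $L$-holes attached to $s$ such a camera sees the visibility vertex $p(s)$, whereas a camera that does not cross $s$ cannot; this is exactly the content of Observation~\ref{obs:verticalOnlyOneGuarding} together with its horizontal analogue. The small rectangle in the bottom-left corner is the device that accounts for the ``$+1$'': it must be guarded, yet by construction no camera that can reach it can also see a visibility vertex, so guarding it always costs one dedicated camera.

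\noindent\textbf{Forward direction ($\Rightarrow$).} Given hitting lines $l_1,\dots,l_k$, I would clip each $l_i$ to a maximal trajectory inside $P$ and take the resulting $k$ cameras. Since every unit segment $s\in I$ is crossed by some $l_i$, the corresponding camera sees $p(s)$, so these $k$ cameras collectively see all $n$ visibility vertices. I would then add a single further camera whose trajectory runs along the boundary of the enclosing rectangle, passing through the small corner rectangle, and argue that it sees the entire bulk of $P$: every point of $P$ outside the immediate neighbourhoods of the visibility vertices is orthogonally visible from this one trajectory, and the small rectangle lies on it. This yields $k+1$ cameras guarding $P$.

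\noindent\textbf{Backward direction ($\Leftarrow$).} Given $k+1$ cameras guarding $P$, let $C'$ be the subset that sees at least one visibility vertex. Because the small rectangle is isolated, no camera in $C'$ can guard any point of it; since the small rectangle must be guarded, at least one of the $k+1$ cameras lies outside $C'$, so $|C'|\le k$. I then turn each camera in $C'$ into one hitting line: if a camera sees two or more visibility vertices $p(s),p(s'),\dots$, then by the contrapositive of Observation~\ref{obs:verticalOnlyOneGuarding} and its horizontal analogue it must cross every corresponding unit segment, so its supporting line hits all of them and I keep that line; if it sees exactly one vertex $p(s)$, I discard its supporting line and use instead any axis-parallel line through $s$. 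Each visibility vertex $p(s)$ is seen by some camera in $C'$, whose assigned line hits $s$, so these at most $k$ lines hit every unit segment in $I$.

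\noindent\textbf{Main obstacle.} The routine part is the counting; the work lies in the two geometric verifications that I have invoked as ``by construction''. In the forward direction I must check that a single extra camera really does cover all of $P$ except the visibility-vertex neighbourhoods that the line-cameras already handle, i.e.\ that the $L$-holes cast no other uncovered shadow. In the backward direction I must confirm the two properties that make the gadget work: that any camera reaching the small rectangle is blind to every visibility vertex (isolation), and that a camera seeing two visibility vertices is forced to cross both unit segments, which requires establishing the horizontal counterpart of Observation~\ref{obs:verticalOnlyOneGuarding}. These facts depend on the exact placement and $\tfrac{1}{12}$-grid dimensions of the $L$-holes shown in Figures~\ref{fig:initialGadgets}--\ref{fig:multipleSegments}, and verifying them is where the care is needed.
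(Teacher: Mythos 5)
Your backward direction is essentially the paper's own argument: one camera must guard the bottom-left corner of the small rectangle and, by construction, sees no visibility vertex, so at most $k$ cameras see visibility vertices; each of these is then converted into a single hitting line using Observation~\ref{obs:verticalOnlyOneGuarding}. The paper does the conversion by translating cameras (a horizontal camera is slid until it aligns with $s$, a non-crossing vertical camera is slid until it hits $s$) rather than via the contrapositive, but both versions rest on the same unstated horizontal analogue of Observation~\ref{obs:verticalOnlyOneGuarding} that you correctly flag. That half is sound.

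The forward direction, however, contains a step that would fail, not merely a verification left to the reader. You split the work as: the $k$ clipped lines see the visibility vertices, and one extra boundary camera sees ``the entire bulk of $P$,'' i.e.\ everything outside small neighbourhoods of the $p(s)$'s. No single orthogonal segment has this property: whatever its orientation, every $L$-hole casts a shadow with respect to it (for a horizontal camera near the bottom of $P$, the entire vertical strip above each $L$-hole is invisible), and these strips are long regions far from any visibility vertex. Hence the shadows must be covered by the $k$ line-cameras, and making that true is exactly where the paper's construction is careful in two ways you omit. First, a vertical hitting line through the interior of a unit segment can pass through the $L$-holes themselves, so $l_i\cap P$ may be disconnected and ``clip to a maximal trajectory'' is not well defined (and the surviving piece need not cross $s$); the paper first translates each such line to the midpoint of the segment it hits, which preserves the set of segments it crosses and guarantees the clipped segment is a full chord of $P$ avoiding all $L$-holes. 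Second, the orientation of the extra camera must be chosen as a function of the orientations of the clipped lines: the paper adds the maximal horizontal segment $e$ when all clipped lines are vertical, and the maximal vertical segment $e'$ when all are horizontal (either one in the mixed case), precisely because an all-vertical or all-horizontal family leaves regions between $L$-holes of different unit segments unguarded (footnote in the paper's proof). Your extra camera is a single fixed boundary trajectory chosen independently of the hitting lines; if, say, all hitting lines are vertical and your boundary camera is vertical as well, the resulting $k+1$ all-vertical segments do not guard $P$. So the forward direction needs the midpoint translation and the orientation-adapted extra segment; the division of labor you propose cannot be verified because, as stated, it is false.
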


\begin{figure}[t]
\centering%
\includegraphics[width=0.60\textwidth]{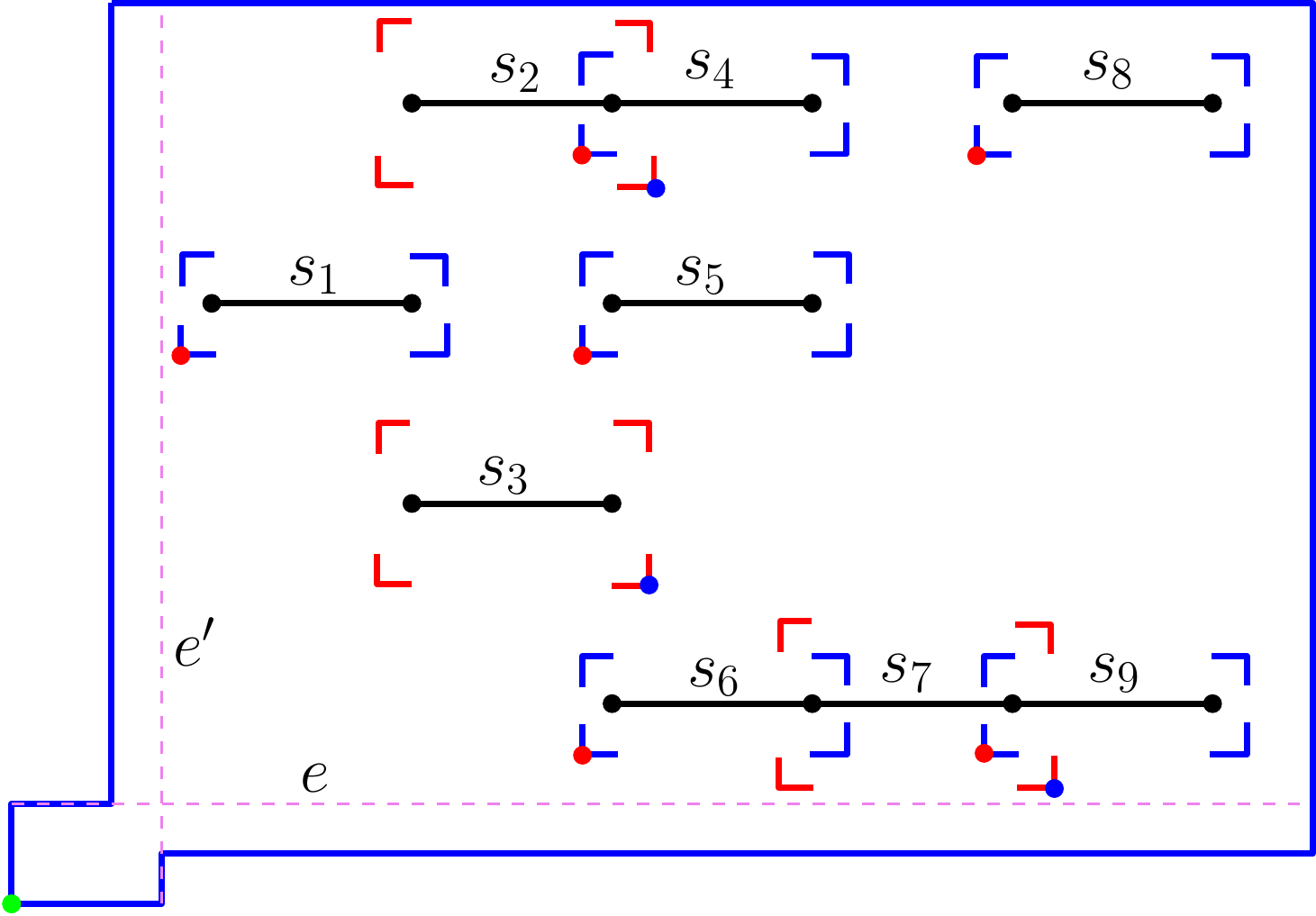}
\caption{A complete example of the reduction, where $I=\{s_1,
s_2,\dots,s_9\}$, with the assumption that the $x$-coordinate of
$a_{s_1}$ is even. Each line segment that has a bend represents an
$L$-hole associated with a unit segment. The visibility vertices of
the unit segments in $I$ are shown red or blue appropriately. Note
the green vertex on the lower left corner of the smaller rectangle;
this vertex is only visible to the line segments that pass through
the interior of the smaller rectangle, which in turn cannot
intersect any unit segment in $I$.}
\label{fig:reductionExample}%
\end{figure}

\begin{proof}
$(\Rightarrow)$ Suppose there exists a set $S$ of $k$ lines such
that each unit segment in $I$ is hit by at least one line in $S$.
Let $L\in S$ and let $L_P=L\cap P$. If $L$ is horizontal, then it is
easy to see that $L$, and therefore $L_P$, does not cross any
$L$-hole inside $P$. Similarly, if $L$ is vertical and passes
through an endpoint of some unit segment(s) in $I$, then neither $L$
nor $L_P$ passes through the interior of any $L$-hole in
$P$.\footnote{Note that it is possible that $L$ passes through the
boundary of some $L$-hole.} Now, suppose that $L$ is vertical and
passes through the interior of some unit segment $s\in I$. Translate
$L_P$ horizontally such that it passes through the midpoint of $s$.
Since unit segments have endpoints on adjacent integer grid point,
$L_P$ still crosses the same set of unit segments of $I$ as it did
before this move. Moreover, this ensures that $L_P$ does not cross
any $L$-hole inside $P$. Consider the set $S'=\{L_P\mid L\in S\}$.

We observe that the line segments in $S'$ cannot guard the interior
of the smaller rectangle. Moreover, if the all line segments in $S'$
are vertical or all of them are horizontal, then they cannot
collectively guard the bigger rectangle
entirely.\footnote{Specifically, in either cases, there are regions
between two $L$-holes associated with different unit segments that
cannot be guarded by any line segment.} In order to guard $P$
entirely, we add one more orthogonal line segment $C$ as follows: if
the all line segments in $S'$ are vertical (resp., horizontal), then
$C$ is the maximal horizontal (resp., the maximal vertical) line
segment inside $P$ that aligns with the upper edge (resp., the right
edge) of the smaller rectangle of $P$; see the line segment $e$
(resp., $e'$) in Figure~\ref{fig:reductionExample}. If the line
segments in $S'$ are a combination of vertical and horizontal line
segments, then $C$ can be either $e$ or $e'$. It is easy to observe
that now the line segments in $S'$ along with $C$ collectively guard
$P$ entirely. Therefore, we have established that the entire polygon
$P$ is guarded by $k+1$ orthogonal line segments inside $P$ in
total.

$(\Leftarrow)$ Now, suppose that there exists a set $M$ of $k+1$
orthogonal line segments contained in $P$ that collectively guard
$P$. Let $c\in M$ and let $L_c$ denote the line induced by $c$. We
find $k$ lines that form a solution to instance $I$ by moving the
line segments in $M$ accordingly such that each unit segment in $I$
is hit by at least one of the corresponding lines. Let $c_0\in M$ be
the line segment that guards the bottom left vertex of the smaller
rectangle of $P$. We know that $L_{c_0}$ cannot guard $p(s)$ for all
line segments $s\in I$. For each unit segment $s\in I$ in order,
consider a line segment $l\in M\setminus \{c_0\}$ that guards
$p(s)$. If $l$ is horizontal and $L_l$ does not align $s$, then move
$l$ accordingly up or down until it aligns $s$. Therefore, $L_l$ is
a line that hits $s$. Now, suppose that $l$ is vertical. If $l$
intersects $s$\footnote{It might be possible that a line segment
guards the visibility vertex of a unit segment but does not
intersect the unit segment.}, then $L_l$ also intersects $s$.
Otherwise, by Observation~\ref{obs:verticalOnlyOneGuarding}, $p(s)$
is the only visibility vertex that is visible to $l$. Therefore,
move $l$ horizontally to the left or to the right until it hits $s$.
Therefore, $L_l$ is a line that hits $s$ after this move.

We observe that we obtained exactly one line from each line segment
in $M\setminus \{c_0\}$. Therefore, we have found $k$ lines such
that each unit segment in $I$ is hit by at least one of the lines.
This completes the proof of the lemma.
\end{proof}
By Lemma~\ref{lem:npHardnessIff}, we obtain the main result of this
section:
\begin{theorem}
\label{subsec:orthogonalCameras:npHardnessPolygonsWithHoles}%
The \textsc{MCSC With Holes} is \textsc{NP}-hard.
\end{theorem}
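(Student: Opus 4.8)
The plan is to derive the theorem as a short corollary of Lemma~\ref{lem:npHardnessIff}, since that lemma already supplies the heart of the argument; the remaining work is to package the reduction as a polynomial-time many-one reduction and to verify its cost. First I would recall that Hassin and Meggido~\cite{hassin1991} proved the \textsc{Min Segment Hitting} problem (Definition~\ref{def:minimumHitProblem}) to be \textsc{NP}-complete. Given an arbitrary instance $I$ of \textsc{Min Segment Hitting}, the reduction described above produces the orthogonal polygon $P$, and Lemma~\ref{lem:npHardnessIff} asserts that $I$ admits $k$ orthogonal lines hitting every unit segment if and only if $P$ admits $k+1$ orthogonal line segments that collectively guard it. Consequently, an algorithm that solves \textsc{MCSC With Holes} on $P$ decides the instance $I$ simply by comparing the optimal cover cardinality against $k+1$. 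This establishes the reduction $\textsc{Min Segment Hitting} \leq_p \textsc{MCSC With Holes}$ at the level of correctness.

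Next I would confirm that the construction of $P$ from $I$ can be carried out in polynomial time, which is the only quantitative claim the theorem still needs. Each unit segment $s \in I$ contributes exactly four $L$-holes, and each $L$-hole is a constant-size orthogonal gadget with a bounded number of vertices; the construction further adds only the enclosing rectangle and the single small corner rectangle. Hence $|V(P)| = O(n)$. The gadgets are placed on a fixed $\frac{1}{12}\times\frac{1}{12}$ grid, so after scaling all coordinates uniformly by $12$ every vertex of $P$ has integer coordinates whose magnitude is polynomially bounded in the coordinates appearing in $I$, and thus in the encoding length of $I$. Therefore $P$ and its full description are computable in time polynomial in $n$.

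Since the lemma does the combinatorial heavy lifting, the main point still requiring care is the non-interference of the $L$-holes belonging to distinct unit segments, as this is exactly the geometric property that makes both directions of the mapping in Lemma~\ref{lem:npHardnessIff} well defined. I would note, however, that this was already secured in the gadget description: two unit segments that share an endpoint have $a$-coordinates of opposite parity and so receive holes at incompatible offsets, while the segments in $N(s)$ lie at vertical distance at least one from $s$. Granting the lemma, the theorem follows immediately — a polynomial-time algorithm for \textsc{MCSC With Holes}, composed with the polynomial-time reduction above, would decide an \textsc{NP}-complete problem in polynomial time — so \textsc{MCSC With Holes} is \textsc{NP}-hard. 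I expect the chief risk in writing this out to be not the logical step but the bookkeeping: ensuring the reduction is phrased as a genuine decision-problem reduction (matching the threshold $k+1$) and that the coordinate-scaling argument is stated precisely enough to rule out any superpolynomial blowup.
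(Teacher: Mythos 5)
Your proposal is correct and follows the same route as the paper: the paper likewise obtains the theorem directly from Lemma~\ref{lem:npHardnessIff} together with the \textsc{NP}-completeness of \textsc{Min Segment Hitting}~\cite{hassin1991}, leaving the polynomial-time constructibility of $P$ implicit in the gadget description. Your additional bookkeeping (the $O(n)$ gadget count, coordinate scaling, and the threshold comparison against $k+1$) simply makes explicit what the paper's one-line proof takes for granted.
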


\section{Conclusion}
\label{sec:conclusion}%
In this paper, we studied the problem of guarding an orthogonal
polygon $P$ with sliding cameras that was introduced by Katz and
Morgenstern~\cite{katz2011}. We considered two variants of this
problem: the minimum-cardinality sliding cameras problem (in which
the objective is to minimize the number of sliding cameras used to
guard $P$) and the minimum-length sliding cameras problem (in which
the objective is to minimize the total length of trajectories along
which the cameras travel).

We gave a polynomial-time algorithm that solves the minimum-length
sliding cameras problem exactly even for orthogonal polygons with
holes, answering a question asked by Katz and
Morgenstern~\cite{katz2011}. We also showed that the
minimum-cardinality sliding cameras problem is \textsc{NP}-hard when
$P$ contains holes, which partially answers another question asked
by Katz and Morgenstern~\cite{katz2011}.


\end{document}